\newtcolorbox{mybox}{colback=elgray,boxrule=0mm,top=0.5cm,bottom=0.5cm,left=0.5cm,right=1.5cm,}
\newtheorem{tm}{Theorem}
\newtheorem{dfn}{Definition}
\newtheorem{lma}{Lemma}
\newtheorem{assu}{Assumptions}
\newtheorem{prop}{Proposition}
\newtheorem{cro}{Corollary}
\newtheorem*{theorem*}{Theorem}
\newcommand{\cor}{\begin{cro}}
\newcommand{\corr}{\end{cro}}
\newtheorem{exa}{Example}
\newcommand{\ex}{\begin{exa}}
\newcommand{\exx}{\end{exa}}
\newtheorem{remak}{Remark}
\newcommand{\rmk}{\begin{remak}}
\newcommand{\rmkk}{\end{remak}}
\newcommand{\thm}{\begin{tm}}
\newcommand{\thmm}{\end{tm}}
\newcommand{\lm}{\begin{lma}}
\newcommand{\lmm}{\end{lma}}
\newcommand{\ass}{\begin{assu}}
\newcommand{\asss}{\end{assu}}
\newcommand{\df}{\begin{dfn}  }
\newcommand{\dff}{\end{dfn}}
\newcommand{\prp}{\begin{prop}}
\newcommand{\prpp}{\end{prop}}
\newcommand{\bqu}{\sloppy \small \begin{quote}}
\newcommand{\equ}{\end{quote} \sloppy \large}
\newcommand{\eq}{\begin{equation}}
\newcommand{\eqq}{\end{equation}}
\newtheorem{claim}{\it Claim}
\newcommand{\cl}{\begin{claim}}
\newcommand{\cll}{\end{claim}}
\newcommand{\bit}{\begin{itemize}}
\newcommand{\eit}{\end{itemize}}
\newcommand{\ben}{\begin{enumerate}}
\newcommand{\een}{\end{enumerate}}
\newcommand{\bcen}{\begin{center}}
\newcommand{\ecen}{\end{center}}
\newcommand{\fn}{\footnote}
\newcommand{\ds}{\begin{description}}
\newcommand{\dss}{\end{description}}
\newcommand{\cs}{\begin{cases}}
\newcommand{\css}{\end{cases}}
\newcommand{\customlabel}[2]{%
\protected@write \@auxout {}{\string \newlabel {#1}{{#2}{}}}}
\def\reff #1\par{\noindent\hangindent =\parindent
\hangafter =1 #1\par}
\def\title #1{\begin{center}
{\Large {\bf #1}}
\end{center}}
\def\author #1{\begin{center} {\large #1}
\end{center}}
\def\date #1{\centerline {\large #1}}
\def\date #1{\centerline {\large #1}}
\def\R{{\rm I\kern-1.7pt R}}
\let\Large=\large
\let\large=\normalsize
\theoremstyle{plain}
\newtheorem{theorem}{Theorem}[section]
\newtheorem{corollary}[theorem]{Corollary}
\newtheorem*{proposition*}{Proposition}
\newtheorem*{corollary*}{Corollary}
\newtheorem*{subclaim*}{Claim}
\theoremstyle{definition}
\newtheorem*{rem*}{Remark}
\newtheorem*{definition*}{Definition}
\newtheorem*{example*}{Example}
\newtheorem*{fact*}{Fact} 
\theoremstyle{remark}
{\end{minipage}\end{equation}}
\newenvironment{eqpar*}{\begin{equation*}\begin{minipage}{0.8\columnwidth}}%
{\end{minipage}\end{equation*}}
\DeclareMathOperator{\expectation}{\mathbb{E}}
\newcommand{\bsim}[1]{\mathbin{\sim_{#1}}}
\newcommand{\nbsim}[1]{\mathbin{\not\sim_{#1}}}
\newcommand{\rsim}[1]{\bsim[r]}
\newcommand{\nrsim}[1]{\nbsim[r]}
\providecommand{\int}{\Zset}
\providecommand{\reals}{\Rset}
\DeclareMathOperator{\prob}{\Delta}
\newcommand{\term}[1]{\textbf{#1}}
\newcommand\xqed[1]{%
  \leavevmode\unskip\penalty9999 \hbox{}\nobreak\hfill
  \quad\hbox{{#1}}}
\newcommand\qedthm{\xqed{${\scriptscriptstyle\blacksquare}$}}
\DeclareSymbolFont{AMSb}{U}{msb}{m}{n}
 \DeclareMathSymbol{\Nset}{\mathbin}{AMSb}{"4E}
 \DeclareMathSymbol{\Zset}{\mathbin}{AMSb}{"5A}
 \DeclareMathSymbol{\Rset}{\mathbin}{AMSb}{"52}
 \DeclareMathSymbol{\Qset}{\mathbin}{AMSb}{"51}
  \DeclareMathSymbol{\Fset}{\mathbin}{AMSb}{"46}
 \DeclareMathSymbol{\Cset}{\mathbin}{AMSb}{"43}
 \DeclareMathSymbol{\Kset}{\mathbin}{AMSb}{"4B}
 \DeclareMathSymbol{\Sset}{\mathbin}{AMSb}{"53}
 \definecolor{Xred}{RGB}{204,102,102}
\definecolor{Xgreen}{RGB}{102,153,102}
\definecolor{Xblue}{RGB}{51,102,204}
\definecolor{Xgold}{RGB}{204,153,51}
\definecolor{AcceptCone}{HTML}{0E8F7D}
\definecolor{RejectCone}{HTML}{E46C16}
\definecolor{BallpenBlue}{HTML}{1B4997}
\definecolor{UniBlue}{HTML}{113777}
\definecolor{DkGreen}{HTML}{0E8F7D}
\definecolor{dcyan}{RGB}{0,55,55}
\definecolor{elgray}{RGB}{245,245,245}
\pgfplotsset{compat=newest}
\tikzstyle{dot}=[fill=black,circle,minimum size=3pt,inner sep=0pt]
\tikzstyle{label}=[draw=none,fill=none, minimum size=0, inner sep=0]
\tikzstyle{evidence}=[fill=black!25]
\tikzstyle{query}=[fill=black,text=white]
\definecolor{pnas}{RGB}{185,69,35}
\definecolor{pnasblue}{RGB}{12,100,180}
\definecolor{el}{RGB}{147,72,29}
\definecolor{elo}{RGB}{191,90,35}
\theoremstyle{plain}
\newtheorem{prototheorem}{Theorem}[section]
\newtcolorbox{mypbox}{colback=orange!8,boxrule=0mm,top=10pt,bottom=10pt,left=5pt,right=1pt,}
\newenvironment{cthm}
   {\begin{mypbox}\begin{prototheorem}}
   {\end{prototheorem}\end{mypbox}}
\begin{document}

\begin{titlepage}
\def\thefootnote{\fnsymbol{footnote}}
\vspace*{0.8in}

\title{\scshape\fontsize{12}{14}\selectfont {\bf \scshape\fontsize{12}{14}\selectfont  Two-Person Adversarial Games are Zero-Sum:\\  \vspace{0.5em}
 An Elaboration of a Folk Theorem}\fn{\noindent The authors are grateful to Samuel A. Alexander, Pradeep Dubey, Marcelo Ariel Fernandez, Ani Ghosh, Michael Grossberg, David Kellen, Yehuda John Levy, Conor Mayo-Wilson, Andrew McKenzie, Lawrence S. Moss, Aristotelis Panagiotopolous, Rohit Parikh, Marcus Pivato, Andrew Powell, Ram Ramanujam, Fedor Sandomirskiy, Amnon Schreiber, Wolfgang Spohn, Jack Stecher, Maxwell Stinchcombe, and Metin Uyanik for conversation and continuing collaboration on the subject of this letter.  Khan's work
was supported by the 2022 Provost's JHU Discovery Award ``Deception and Bad-Faith Communication.'' Aspects of this work were presented for the Research Seminar in Set Theory at the University of Vienna (March 14, 2024), the CUNY Graduate Center Seminar in Philosophy, Logic and Games (\textsc{philog}; March 29, 2024), and \textit{The 35th International Conference on Game Theory} at Stony Brook (July 16, 2024).
The authors extend a final thanks to an anonymous reviewer for a careful reading of the manuscript.}}
 \vskip 1.5em
 
 \author{\orcidlinki{M. Ali Khan}{0000-0003-3852-4400}\,\footnote{Department of Economics, The Johns Hopkins University\\  {\bf email} {akhan@jhu.edu }}, \orcidlinki{Arthur Paul Pedersen}{0000-0002-2164-6404}\,\footnote{Department of Computer Science, Remote Sensing Earth Systems Institute, The City University of New York\\ {\bf email}
 {app@arthurpaulpedersen.org}} \footnote{Corresponding author} 
  and  \orcidlinki{David Schrittesser}{{0000-0002-4622-2675}}\,\footnote{Institute for Advanced Study in Mathematics, Harbin Institute of Technology\\  {\bf email}     {david.schrittesser@univie.ac.at} }   }

\vskip 1.00em
\date{July 31, 2024}

\vskip 1.75em

\vskip 1.00em

\baselineskip=.18in

\noindent{\bf Abstract.} The observation that every two-person adversarial game is an affine transformation of a zero-sum game is traceable to \citet{lr57} and made explicit in \citet{au87Palgrave}. Recent work of \citet{adp09} and of \citet{ra23} in increasing generality, proves what has so far remained a conjecture. We present two proofs of an even more general formulation: the first draws on multilinear utility theory developed by \citet{fr78}; the second is a consequence of \citeauthor{adp09}'s \citeyearpar{adp09} proof itself for a special case of a two-player game in which each player has a set of three actions.

\vskip 2em
{\fontsize{10}{11}\selectfont
\noindent {\it Journal of Economic Literature} Classification
Numbers: C72, D01. 

\vskip 0.8em

\noindent {\it 2020 Mathematics Subject} Classification Numbers: 91A05,  91A10,  91A30.

\vskip .8em

\noindent {\it Keywords}:\quad Game Theory $\cdot$ Two-Person Games $\cdot$ Strictly Competitive Games $\cdot$ Adversarial Games $\cdot$ Utility Theory

\vskip .8em

\noindent {\it Running Title}:\quad
Two-Person Adversarial Games are Zero-Sum 
}

\end{titlepage}

%%%Highlights

% \section*{Highlights}

% \bigskip 

% \bit
% \ml  
% A theorem that formulates the Luce-Raiffa-Aumann (LRA) conjecture
% in its fullest form to date and has as its corollaries 
% recent theorems of \cite{adp09} (ADP) and \cite{ra23}.

% \ml A proof of the theorem that appeals to a result of  \cite{fr78}   on axiomatizations of  multilinear utility theory;  a connection not forged before.  

% \ml An alternative proof of the theorem as a consequence of the  ADP result for a  game with two-players, each limited to three actions; an observation missed in earlier work. 

% \ml An open question (to be pursued elsewhere) arising of the re-scaling result concerning the  existence of Nash equilibria in games defined on mixture spaces. 

%  \ml A potential connection of the second proof to the work of \cite{mv78}  on strategically zero-sum games,  and thereby to recent work in computer science on fast algorithms for the computation of Nash equilibria. 
 
% \eit

%%%End of Highlights

%\newpage
%
%\tableofcontents
%
%\vspace{15pt}
%
%~~~~~~~ ------------------------------------------------%--------------- 
%
%\vspace{15pt}

\setcounter{footnote}{0}
%
%
%

%%%%%%%%%%%%%%%%%%%%%%%%%%%%%%%%%%%%%%%%%%%%%%%%%%%%%%%%%%%%%%%%%%%%%%%%%%%%%%%%%%%%%%
%%%%%%%%%%%%%%%%%%%%%%%%%%%%%%%%%%%%%%%%%%%%%%%%%%%%%%%%%%%%%%%%%%%%%%%%%%%%%%%%%%%%%%

%

\pagebreak

\vspace*{-50pt}

\begin{mybox}{\bfseries\fontsize{14}{16}\selectfont Highlights}{\fontsize{12}{14}\selectfont
\begin{itemize}[itemsep=1em,rightmargin=1pt,labelsep=0pt,leftmargin=20pt,labelwidth=18pt,topsep=1.5em,itemindent=0pt,align=left]
    \item[{\fontsize{9}{10}\textbullet}] Theorem casting Luce-Raiffa--Aumann conjecture in most general form to date

    \item[{\fontsize{9}{10}\textbullet}]  Theorem entailing as corollaries results of \citet{adp09} and \citet{ra23}

\item[{\fontsize{9}{10}\textbullet}]  Novel proof adapting multilinear utility theory of \citet{fr78}%\citetalias{fr78} 
%\citeyearpar{fr78}
, throwing light on relationship that has escaped treatment in prior literature

\item[{\fontsize{9}{10}\textbullet}]  Alternative proof appealing to result of \citet{adp09} for two-player game in which each player is limited to three actions, an approach missed in earlier work

\item[{\fontsize{9}{10}\textbullet}]  Formal \& thematic ties to  \citet{mv78} %\citetalias{mv78} \citeyearpar{mv78} 
on strategically zero-sum games

\item[{\fontsize{9}{10}\textbullet}]  
Implications for complexity of computing Nash equilibria \'{a} la \citet{dgp09}

\end{itemize}
}
\end{mybox}

\vspace*{10pt}

\bqu In all of man's written record there has been a preoccupation with conflict
of interest; possibly only the topics of God, love, and inner struggle have received comparable attention. The scientific study~\ldots~comprises a small, but growing, portion of this literature. As a reflection of this trend, we find today that conflict of interest, both among individuals
and among institutions, is one of the more dominant concerns of at least several of our academic departments: economics, sociology, political
science, and other areas to a lesser degree.

\hspace{ \stretch{3} } ---\citet{lr57}\equ 

\vspace*{10pt}

\bqu   Game theory,  a mathematical theory that  shares a common foundation in
the assumption  that actors must be strategic, or individualistically
competitive against others, offers a unified methodology and a comprehensive
understanding of purposive agency that rejects joint maximization and shared
intention, and reduces preference satisfaction to narrow self-interest.\footnote{The author notes, ``Theorists apply the same tools and models to widely divergent fields of investigation:  economics, politics, conflict resolution and evolution.'' The  epigraph is a composite taken from  sentences in~\citet[pp. vxiii-xx]{am15}. The footnotes of the author's prologue give an up-to-date bibliography on the subject enjoying extensive interdisciplinary reach.}

\hspace{ \stretch{3} } ---\citet{am15}  \equ

\section{Introduction}

A folk theorem of central importance to the theory of games states that every two-person adversarial game is an affine transformation of a zero-sum game.  
\citet{lr57} appeal to this observation in their exposition of two-person non-cooperative games to justify confining attention to zero-sum games in their treatment of adversarial games --- called by them \textit{strictly competitive} games.

Decades later, \citet{au87Palgrave} gave the contours of a formal description of the folk theorem in his masterful 1987 survey of game theory:
\begin{quote}
\small
Recall that a strictly competitive game is defined as a two-person game in which if one outcome is preferred to another by one player, the preference is reversed for the other.
Since randomized strategies are admitted, this condition applies also to mixed outcomes (probability mixtures of pure outcomes). From this it may be seen that a two-person game is strictly competitive if and only if, for an appropriate choice of utility functions, the utility payoffs of the players sum to zero in each square of the matrix (p. 13).\footnotemark
\end{quote}\footnotetext{\label{fn:2}We remind the reader that, in general, the mixed extension of an adversarial game is \textit{not} adversarial.  This letter's principal result, \autoref{t.fishburn}, accordingly casts the folk theorem in a form that is correct.}
More recently,~\citet{adp09}, henceforth  ADP, have complained that  the
literature treating the folk theorem has sown confusion and further that  no
proof exists in full or in outline.
They proceed to give two elementary proofs of the folk theorem for finite action sets, one in an algebraic register, the other in a combinatorial one.
 
Subsequent work by~\citet{ra23} takes up the folk theorem, calling it the
Luce-Raiffa--Aumann (LRA) conjecture, as we shall  henceforth call the folk
theorem.\fn{For precise renderings of the LRA conjecture, see \autoref{t.fishburn} and earlier expressions
in~\citet{au87Palgrave} (but see \autoref{fn:2}), \citet{adp09},~and~\citet{ra23}.} \cite{ra23}  proves the LRA conjecture for the case of closed interval action sets and continuous utilities, 
extending the result of ADP for finite action sets. He offers a functional analytic proof by necessity different from the previous ones.

 This letter offers a synthetic treatment, and thereby  advances the following contributions to the theory of  two-person games:

\begin{itemize}[itemsep=1em,leftmargin=3em,labelsep=.2in,topsep=1em,rightmargin=30pt]
\item[{\fontsize{9}{10}$\blacktriangleright$}]  A theorem (\autoref{t.fishburn}) that casts the LRA conjecture in its
most general form to date and has as corollaries the central results of \citet{adp09} and \citet{ra23}
\item[{\fontsize{9}{10}$\blacktriangleright$}] A novel proof of the aforementioned theorem that adapts results in multilinear utility theory due to \citet{fr78}, a relationship that has escaped treatment in prior literature
\item[{\fontsize{9}{10}$\blacktriangleright$}] An alternative proof (\autoref{ss.altproof}) that appeals to the result of \citet{adp09} for a two-player
game in which each player is limited to three actions, an approach missed in earlier work
\item[{\fontsize{9}{10}$\blacktriangleright$}] Formal and thematic connections to developments of \citet{mv78} on strategically zero-sum games
\item[{\fontsize{9}{10}$\blacktriangleright$}] Implications for the study of complexity of computing Nash equilibria \'{a} la \citet{dgp09}
\end{itemize}

 \def\sectionautorefname{Section}
  \def\subsectionautorefname{Section}

The remainder of this letter is set forth as follows: 
\autoref{sec:2} lays out the basic notation and terminology necessary for
presenting the theorem due to \citet{fr78} that is used in this paper.  \autoref{s.fishburn} formulates and proves
our main result (\autoref{t.fishburn}) and uses it to derive the recent theorems of  ADP (\autoref{corollary:adp09})  and Raimondo (\autoref{cor:ra23}); this section also gives an alternative proof of our main result (\autoref{ss.altproof}). \autoref{TeXFolio:sec4} concludes the letter by
drawing connections to~\citet{mv78} on strategically zero-sum games and recent work in both computer science and economics on
computing equilibria of two-player games.

\section{Notation and Terminology}
\label{sec:2}

Given a binary relation $\precsim$ on a set $X$, we denote its 
asymmetric and symmetric parts on $X$ by $\prec$ and $\sim$, respectively.  Given an $n$-tuple $(p_{1},\ldots,p_{n})$ belonging to the $n$-fold Cartesian product of $n$ sets and $i=1,\ldots, n$, we adopt the usual convention to denote by $p_{-i}$ the $n-1$-tuple  $(p_1, \ldots, p_{i-1},p_{i+1},\ldots,p_{n})$ and by $(q, p_{-i})$ the $n$-tuple $(p_1, \ldots, p_{i-1}, q, p_{i+1},\ldots p_{n})$.

An \term{ordered bilinear mixture space} is a quintuple of the form $\mathcal{M}=\bigl\langle M_{1},M_{2},\,\precsim,\, \oplus_{1},\oplus_{2}\bigr\rangle$ consisting of nonempty sets $M_{1}$ and $M_{2}$, a binary relation $\precsim$ on $M_{1}\times M_{2}$, and functions $\oplus_{i}$  for $i=1,2:$
\begin{align*}
\oplus_{i}: [0,1]\times M_{i}\times M_{i}\,\to\, M_{i}&\qquad 
(\alpha,p,q)\,\mapsto\, \alpha p\,\oplus_{i}\,(1-\alpha) q 
\end{align*}
satisfying the following five requirements for all $p,q,r,s\in M_{1}\times M_{2}$, $\alpha,\beta\in [0,1]$, and $i,j=1,2$:

\begin{itemize}[itemsep=1.5em,leftmargin=5em,labelsep=.35in,topsep=1em]
\item[{\textsc{ms}1}] {$\precsim$} is a total preorder on $M_{1}\times M_{2}$;

\item[{\textsc{ms}2}] $\Bigl( \,\alpha p_{i}\oplus_{i}(1-\alpha) q_{i}, ~r_{-i}\,\Bigr) \;\sim\;\Bigl(\, (1-\alpha) q_{i}\oplus_{i}\alpha p_{i}, ~r_{-i}\,\Bigr)$

\item[{\textsc{ms}3}] $\biggl(\,\beta\bigl(\alpha p_{i}\oplus_{i}(1-\alpha) q_{i}\bigr)\oplus_{i}\bigl(1-\beta\bigr)q_{i}, ~r_{-i}\,\biggr) \;\sim\;\Bigl(\,\alpha\beta p_{i}\oplus_{i} (1-\alpha\beta)q_{i}, ~r_{-i}\,\Bigr)$

\item[{\textsc{ms}4}] If $p\,\prec\, q$ and $q \,\prec\, (r_{i}, p_{-i})$, then there are $\alpha,\beta\in(0,1)$ such that:

\vspace*{2ex}

\hspace*{4em}$\Bigl( \,\alpha p_{i}\oplus_{i}(1-\alpha) r_{i}, ~p_{-i}\,\Bigr)\;\prec\; q$ \qquad and \qquad $q\;\prec\;\Bigl( \,\beta p_{i}\oplus_{i}(1-\beta) r_{i}, ~p_{-i}\,\Bigr)$ 

\item[\textsc{ms}5]  If $p\,\prec\, q$ and $(r_{i}, p_{-i})\;\sim\; (s_{j},q_{-j})$, then:

\vspace*{2ex}

\hspace*{14em}$\Bigl(\,\alpha p_{i}\oplus_{i} (1-\alpha)r_{i}, ~ p_{-i}\,\Bigr)\;\prec\; \Bigl(\,\alpha q_{j}\oplus_{j} (1-\alpha) s_{j}, ~q_{-j}\,\Bigr)$.

\end{itemize}
Conditions \textsc{ms}2, \textsc{ms}3, \textsc{ms}4 are the usual mixture space axioms along each dimension, while condition \textsc{ms}5 is an interdimensional reformulation of the usual independence axiom.

A function
$u:M_{1}\times M_{2}\to\Rset$ is said to be a \term{bilinear representation} of $\precsim$ if it satisfies the following two properties for all $p,q,r\in M_{1}\times M_{2}$, $\alpha\in[0,1]$, and $i=1,2$:
\begin{itemize}[itemsep=1em,leftmargin=6em,labelsep=.4in,topsep=1em]
    \item[\textsc{Rep}] $p\,\precsim\, q$\qquad if and only if\qquad $u(p)\,\leq\, u(q)$; \quad and
    \item[\textsc{Bilin}]  $u\bigl(\,\alpha p_{i}\oplus_{i}(1-\alpha)q_{i}, ~r_{-i}\,\bigr)\quad=\quad \alpha u(p_{i}, ~r_{-i})\;+\;(1-\alpha)u(q_{i}, !r_{-i})$.
    \end{itemize}
Let $X$ be a set. Recall that a function $f:X\to\Rset$ is said to be a \term{positive affine transformation} of a function $g:X\to\Rset$ if there are $\alpha,\beta\in \Rset$ with $\alpha>0$ such that $f(x)=\alpha g(x)\,+\,\beta$  for all $x\in X$.

We can now present: 

\begin{theorem}[\citealt{fr78}]\label{thm:fr78} Suppose $\mathcal{M}=\bigl\langle M_{1},M_{2},\,\precsim,\, \oplus_{1},\oplus_{2}\bigr\rangle$ is an ordered bilinear mixture space.  Then there is a bilinear representation $u$ of $\precsim$ such that any other bilinear representation of $\precsim$ is a positive affine transformation of $u$ --- that is to say, the function $u$ is unique up to a positive affine transformation.

\end{theorem}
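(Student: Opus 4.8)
The plan is to bootstrap from the classical (single‑coordinate) mixture‑set representation theorem of Herstein--Milnor, applied one coordinate at a time, and then to use the interdimensional axiom \textsc{ms}5 to synchronize the resulting one‑dimensional utilities into a single bilinear one; uniqueness will then drop out of the uniqueness clause of the one‑coordinate theorem together with a second appeal to \textsc{ms}5. First I would dispatch the degenerate cases: if $\precsim$ is trivial take $u\equiv 0$, which is a bilinear representation and of which every representation is a (trivial) positive affine transform; otherwise a short transitivity argument shows that some single‑coordinate variation is strictly ordered, and if only one coordinate is ever ``active'' the problem collapses to a single mixture set on $M_{1}$ (or $M_{2}$) and the statement is exactly Herstein--Milnor. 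So assume both coordinates are active and fix reference points $a_{1}\in M_{1}$, $a_{2}\in M_{2}$ such that the slices $M_{1}\times\{a_{2}\}$ and $\{a_{1}\}\times M_{2}$ carry non‑trivial orders.

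Next I would produce the slice representations. For fixed $y\in M_{2}$ the structure $\langle M_{1},\precsim_{y},\oplus_{1}\rangle$, where $p_{1}\precsim_{y}q_{1}$ means $(p_{1},y)\precsim(q_{1},y)$, is a mixture set: \textsc{ms}1 gives the total preorder, \textsc{ms}2--\textsc{ms}3 are the mixture‑set identities, \textsc{ms}4 is the Archimedean axiom, and \textsc{ms}5 with $i=j$ and the two fixed coordinates equal is precisely the independence axiom. Herstein--Milnor therefore yields a mixture‑linear $v_{y}\colon M_{1}\to\Rset$ representing $\precsim_{y}$, unique up to positive affine transformation; symmetrically one gets mixture‑linear $w_{x}\colon M_{2}\to\Rset$ for each $x\in M_{1}$. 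The delicate point, which I expect to be the main obstacle, is \emph{synchronization}: each $v_{y}$ is pinned down only up to an independent positive affine transformation, and the orders $\precsim_{y}$ genuinely differ across $y$ (a bilinear form can reverse on different slices), so one cannot simply reuse one slice's utility everywhere. I would renormalize $v_{y}$ relative to the fixed $v_{a_{2}}$ by a scale $\lambda(y)>0$ and a shift $\mu(y)$, chosen so that the function $u(x,y):=\lambda(y)\,v_{y}(x)+\mu(y)$ is mixture‑linear also in $y$. Axiom \textsc{ms}5 is exactly the tool for this: it transports an indifference $(r_{i},p_{-i})\sim(s_{j},q_{-j})$ between two slices into an order relation between matched mixtures, which pins down $\lambda(y)$ and $\mu(y)$ by two independent relations; applied now to second‑coordinate mixtures it forces $\lambda$ and $\mu$ to vary mixture‑affinely in $y$, which gives property \textsc{Bilin}.

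Then I would verify property \textsc{Rep}. By construction $u(\cdot,y)$ represents $\precsim_{y}$ and $u(x,\cdot)$ represents $\precsim_{x}$. To lift slice comparisons to a global one, given $p=(x_{p},y_{p})$ and $q=(x_{q},y_{q})$ I would route through the intermediate point $(x_{q},y_{p})$ (and, symmetrically, $(x_{p},y_{q})$), using \textsc{ms}4 to guarantee that the relevant intermediate utility levels are actually attained or approximated within the appropriate slice and using \textsc{ms}5 once more to keep the two halves of the comparison on the common scale; since $\precsim$ is total this yields $p\precsim q\iff u(p)\le u(q)$.

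For uniqueness: if $u$ and $u'$ are both bilinear representations, then on each slice $M_{1}\times\{y\}$ they are mixture‑linear representations of the same order $\precsim_{y}$, hence positive‑affinely related by the Herstein--Milnor uniqueness clause; \textsc{ms}5 links every active slice to the common cross‑slice $\{a_{1}\}\times M_{2}$ and forces the scale factor to be one and the same for all of them, while bilinearity forces the additive term to vary affinely, so $u'=\alpha u+\beta$ with $\alpha>0$ on all of $M_{1}\times M_{2}$. The whole argument is thus Herstein--Milnor plus careful bookkeeping, with the one genuinely new ingredient --- and the hard part --- being the \textsc{ms}5‑driven synchronization of scales across slices in the second step.
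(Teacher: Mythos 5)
First, note that the paper does not actually prove this statement: its ``proof'' is a pointer to Theorem~2 of Fishburn and Roberts (1978), and the concluding remarks even flag a self-contained direct proof as a desideratum. So your attempt cannot be matched against an in-paper argument; it has to stand on its own. Measured that way, it is an outline of the right architecture --- slice-wise application of the one-dimensional mixture-set representation theorem followed by an \textsc{ms}5-driven calibration of the slice utilities --- which is indeed the shape of the Fishburn--Roberts argument. (One small correction: \textsc{ms}4 is the Jensen-style Archimedean/solvability axiom rather than Herstein--Milnor's closedness condition, so the one-dimensional theorem you want is the Jensen--Fishburn variant of the mixture-set theorem; the slice orders $\precsim_{y}$ do satisfy its hypotheses, as you note.)

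The genuine gap is that the step you yourself identify as ``the hard part'' is asserted rather than carried out. You never define $\lambda(y)$ and $\mu(y)$: to calibrate slice $y$ against the reference slice $a_{2}$ via \textsc{ms}5 you need a strict comparison $p\prec q$ with $p$ in one slice and $q$ in the other, together with an indifferent cross-slice pair $(r_{1},y)\sim(s_{1},a_{2})$, and neither the existence of such configurations nor the treatment of slices whose order is trivial, reversed, or whose utility range is disjoint from that of the reference slice is addressed; in those cases the calibration must be routed through the second coordinate using mixtures in $M_{2}$, and that is where the actual content of the proof lies. The verification of \textsc{Rep} has the same problem: comparing $p=(x_{p},y_{p})$ with $q=(x_{q},y_{q})$ by routing through $(x_{q},y_{p})$ or $(x_{p},y_{q})$ only works when one of the two intermediate utility values lies between $u(p)$ and $u(q)$, and for a genuinely bilinear $u$ neither need do so; the standard fix replaces the intermediate point by a mixture whose existence \textsc{ms}4 guarantees, a step you gesture at (``attained or approximated'') but do not supply. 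As a roadmap your proposal is sound and consistent with the cited source; as a proof it is missing precisely the calibration and cross-slice comparison arguments that make the theorem nontrivial.
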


\begin{proof}  See proof of Theorem 2 by \citet{fr78}.   \end{proof}

\section{The Luce-Raiffa-Aumann  Conjecture Reformulated }\label{s.fishburn}
A \term{two-person non-cooperative game} is a quadruple of the form $\mathcal{G}=\bigl\langle\mathcal{P}_{1},\mathcal{P}_{2}, u_{1},u_{2}\bigr\rangle $ consisting of nonempty sets $\mathcal P_1$ and $\mathcal P_2$ and real-valued functions $u_{1}$ and $u_{2}$ on the Cartesian product $\mathcal P_1 \times \mathcal P_2$.

A game $\mathcal{G}$ is said to be \term{adversarial}, or \term{strictly competitive}, if  for all $\sigma,\tau\in  \mathcal{P}_1 \times \mathcal{P}_2$:
\begin{equation}\label{e.str.comp}
u_{1}(\sigma) \;\geq \;u_{1}(\tau)\quad \iff\quad u_{2}(\sigma) \;\leq\; u_2(\tau).
\end{equation}
A game $\mathcal{G}$ is said to be \term{zero-sum} if equality $u_{1}(\sigma)\,+\,u_{2}(\sigma)=0$ obtains for all $\sigma\in \mathcal{P}_{1}\times \mathcal{P}_{2}$. 

If for a given game $\mathcal{G}=\bigl\langle\mathcal{P}_{1},\mathcal{P}_{2}, u_{1},u_{2}\bigr\rangle $ each $\mathcal{P}_{i}$ is convex, the game is said to be \term{bilinear}, or \term{bi-affine}, if each function $u_{i}$ is linear in each coordinate --- that is, for all $\sigma,\tau\in \mathcal{P}_{1}\times \mathcal{P}_{2}$, $\alpha,\beta\in[0,1]$, and $u\in\{u_{1},u_{2}\}$:
\begin{align*}
u\Bigl(\,\alpha \sigma_{1}+(1-\alpha)\tau_{1},\beta \sigma_{2} +(1-\beta)\tau_{2}\,\Bigr)&=\alpha u\Bigl(\, \sigma_{1},~\beta \sigma_{2}+(1-\beta)\tau_{2}\,\Bigl)\,\;+\;\,(1-\alpha) u\Bigl(\, \tau_{1},~\beta \sigma_{2}+(1-\beta)\tau_{2}\,\Bigr)\\
&=\beta u\Bigl(\,\alpha \sigma_{1}+(1-\alpha)\tau_{1},~\sigma_{2}\,\Bigr)\;+\,(1-\beta) u\Bigr(\,\alpha \sigma_{1}+(1-\alpha)\tau_{1},~\tau_{2}\,\Bigr).
\end{align*}

\subsection{Two-Person Adversarial Games are Zero-Sum}
 Established forthwith is that \emph{every two-person adversarial game is, up to a positive affine transformation, zero-sum}, by appeal to \autoref{thm:fr78}.

\begin{cthm}\label{t.fishburn}
Let $\mathcal{G}=\bigl\langle\mathcal{P}_{1},\mathcal{P}_{2}, u_{1},u_{2}\bigr\rangle$  be a two-person non-cooperative game.  Suppose $\mathcal P_1$ and $\mathcal P_2$ are convex and $\mathcal{G}$ is bilinear. Then the following are equivalent:
\begin{itemize}[itemsep=0.5em,leftmargin=2em,labelsep=.1in,topsep=1em]
\item[\textup{(a)}] Game $\mathcal{G}$ is adversarial; %\label{i.adversarial}
\item[\textup{(b)}] Function $u_{2}$ is a positive affine transformation of function $-u_{1}$; %\label{i.positive.affine}
\item[\textup{(c)}] There is a positive affine transformation $v_{1}$ of $u_{1}$ such that $\mathcal{Z}=\langle \mathcal{P}_{1},\mathcal{P}_{2},v_{1},u_{2}\rangle$ is zero-sum.\qedthm
\end{itemize}
\end{cthm}

\begin{proof} To show that (a) implies (b), suppose $\mathcal{G}$ is adversarial.  Define a binary relation $\precsim$ on $\mathcal{P}_{1}\times \mathcal{P}_{2}$ by setting for all $\sigma,\tau\in \mathcal{P}_{1}\times \mathcal{P}_{2}$:
\begin{equation*}%\label{rem.cardutil}
\sigma\;\precsim\; \tau\quad\iff\quad -u_{1}(\sigma) \;\leq -u_{1}(\tau).
\end{equation*}
 By stipulation, the function $-u_{1}$ is a bilinear representation of $\precsim$.  Since $\mathcal{G}$ is adversarial, the function $u_{2}$ is also a bilinear representation of $\precsim$. Thus, by \autoref{thm:fr78}, it follows that  $u_{2}=  \alpha(-u_{1}) \,+\,\beta$ for some $\alpha,\beta\in\Rset$ with $\alpha > 0$, whence (b).

 To see that (b) implies (c), suppose $u_{2}$ is a positive affine transformation of $-u_{1}$, whereby $u_{2}= -\alpha u_{1} \,+\,\beta$ for some $\alpha,\beta\in\Rset$ with $\alpha > 0$.  Observe that $\mathcal{Z}=\langle \mathcal{P}_{1},\mathcal{P}_{2},\alpha u_{1} -\beta,u_{2}\rangle$ is zero-sum, as desired.  

For the implication from (c) to (a), it is straightforwardly verified that if $\mathcal{Z}=\langle \mathcal{P}_{1},\mathcal{P}_{2},v_{1},u_{2}\rangle$ is zero-sum for some positive affine transformation $v_{1}$ of $u_{1}$, then game $\mathcal{G}$ is a adversarial.
\end{proof}

\subsection{Antecedent Results as Corollaries}
Given a finite set of actions $S$, let  $\Delta(S)$ denote the set of all simple probability mass function on $S$.  Given a function $\nu:S_{1}\times S_{2}\to \Rset$, let $\mathbb{E}_{\nu}$ denote expected utility  $\mathbb{E}_{\nu} :\Delta(S_{1})\times \Delta(S_{2})\to\Rset$ given by requiring for all $p\in \Delta(S_{1})\times \Delta(S_{2})$:
\begin{align*}
\mathbb{E}_{\nu}(p)\quad&\coloneqq\quad \sum_{(s_{1},s_{2})\in S_{1}\times S_{2}}\!p_{1}(s_{1})p_{2}(s_{2})\nu(s_{1},s_{2}).
\end{align*} 
Now consider $\mathcal G =  \bigl\langle \Delta(S_{1}),\Delta(S_{2}),\mathbb{E}_{\nu_{1}},\mathbb{E}_{\nu_{2}}\bigr\rangle$ for 
real functions $\nu_{1},\nu_{2}$ on $S_{1}\times S_{2}$. Observe that $\mathcal{G}$ is a bilinear two-person non-cooperative game.
An immediate corollary of \autoref{t.fishburn} is the main result reported in \citep[Theorem 1]{adp09}, which we state without proof. 
\begin{corollary}[\citealt{adp09}]\label{corollary:adp09} Let
$\mathcal G = \bigl\langle \Delta(S_{1}),\Delta(S_{2}),\mathbb{E}_{\nu_{1}},\mathbb{E}_{\nu_{2}}\bigr\rangle$
be an
adversarial game based on finite action sets $S_1$ and $S_2$, as above. 
Then $\nu_{2}$ is an positive affine transformation of $-\nu_{1}$.

\end{corollary}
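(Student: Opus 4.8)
The plan is to obtain this as a direct specialization of \autoref{t.fishburn}. First I would verify that $\mathcal G = \bigl\langle \Delta(S_{1}),\Delta(S_{2}),\mathbb{E}_{\nu_{1}},\mathbb{E}_{\nu_{2}}\bigr\rangle$ meets the hypotheses of \autoref{t.fishburn}: each $\Delta(S_i)$ is a simplex and hence convex, and, as already noted in the text, each $\mathbb{E}_{\nu_i}$ is linear in each coordinate, so $\mathcal G$ is bilinear. Since $\mathcal G$ is assumed adversarial, clause (a) of \autoref{t.fishburn} holds, and therefore so does clause (b): there exist $\alpha,\beta\in\Rset$ with $\alpha>0$ such that $\mathbb{E}_{\nu_{2}} = \alpha(-\mathbb{E}_{\nu_{1}}) + \beta$ as functions on $\Delta(S_{1})\times\Delta(S_{2})$.

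Second, I would descend from the mixed extension back to pure action profiles. For $(s_{1},s_{2})\in S_{1}\times S_{2}$ write $\delta_{s_i}\in\Delta(S_i)$ for the point mass at $s_i$; a one-line computation from the definition of $\mathbb{E}_{\nu}$ gives $\mathbb{E}_{\nu_{i}}(\delta_{s_{1}},\delta_{s_{2}}) = \nu_{i}(s_{1},s_{2})$ for $i=1,2$. Evaluating the identity of the previous paragraph at the profile $(\delta_{s_{1}},\delta_{s_{2}})$ yields $\nu_{2}(s_{1},s_{2}) = -\alpha\,\nu_{1}(s_{1},s_{2}) + \beta$; since $(s_{1},s_{2})$ was arbitrary and $\alpha>0$, this exhibits $\nu_{2}$ as a positive affine transformation of $-\nu_{1}$, which is the assertion.

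There is essentially no obstacle here, since all of the analytic content has already been discharged in establishing \autoref{t.fishburn} (which rests in turn on \autoref{thm:fr78}). The only point meriting a word of care is the passage from the mixed extension to pure strategies, and it is immediate: the point masses $\delta_{s_i}$ lie in $\Delta(S_i)$, and the bilinear form $\mathbb{E}_{\nu_i}$ restricts on degenerate profiles to $\nu_i$ itself, so no further appeal to adversariality or to the uniqueness clause is needed once (b) is in hand. Equivalently, one may simply observe that $\nu$ determines and is determined by $\mathbb{E}_{\nu}$, so that the relation ``positive affine transformation'' transfers verbatim between the two levels.
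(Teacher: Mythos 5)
Your proposal is correct and follows exactly the route the paper intends: the paper states this as ``an immediate corollary of \autoref{t.fishburn}'' and omits the proof, and your argument---checking convexity and bilinearity of the mixed extension, invoking (a) $\Rightarrow$ (b), and evaluating at point masses to descend to $\nu_1,\nu_2$---is precisely the fleshing-out of that omitted step. No gaps.
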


We may adopt similar notation to formulate the central result reported in \citep{ra23}. 
Let $\Delta_{\lambda}\bigl([0,1]\bigr)$ denote the set of probability measures on $[0,1]$ which are absolutely continuous with respect to Lebesgue measure. 
Given a continuous function $\xi:[0,1]\times[0,1]\to\Rset$, use  $\expectation_\xi$ to denote expected utility, so that  $\expectation_\xi : \;\mathrm{\Delta}_{\lambda}\bigl([0,1]\bigr)\times \mathrm{\Delta}_{\lambda}\bigl([0,1]\bigr)\to\Rset$
and for all $p \in \mathrm{\Delta}_{\lambda}\bigl([0,1]\bigr)\times \mathrm{\Delta}_{\lambda}\bigl([0,1]\bigr)$:
$$
\expectation_\xi(p)\quad \coloneqq\quad \bigintsss\xi(x_{1},x_{2})\,(p_{1}\otimes p_{2})\,(dx_{1}dx_{2}).
$$
As before, let $\mathcal G \coloneqq \bigl\langle \Delta_{\lambda}\bigl([0,1]\bigr),\Delta_{\lambda}\bigl([0,1]\bigr),\mathbb{E}_{\xi_{1}},\mathbb{E}_{\xi_{2}}\bigr\rangle$, and observe that $\mathcal G$ is a bilinear two-person non-cooperative game.   A  corollary of  \autoref{t.fishburn}, stated next, is the main result reported in \citet{ra23}.

\begin{corollary}[\citealt{ra23}]\label{cor:ra23} Suppose $\mathcal G = \bigl\langle \Delta_{\lambda}\bigl([0,1]\bigr),\Delta_{\lambda}\bigl([0,1]\bigr),\mathbb{E}_{\xi_{1}},\mathbb{E}_{\xi_{2}}\bigr\rangle$ is an adversarial game
based on a common action set $[0,1]$
for continuous real functions $\xi_1$ and $\xi_2$, as above.
Then $\xi_{2}$ is an positive affine transformation of $-\xi_{1}$. 
%--- that is, there are $\alpha,\beta\in\Rset$ with $\alpha >0$ such that $\xi_{2}(x)\,=\,-\alpha\xi_{1}(x)\,+\,\beta$.
\end{corollary}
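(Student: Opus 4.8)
The plan is to obtain this corollary from \autoref{t.fishburn} in exactly the way \autoref{corollary:adp09} is obtained, with one extra approximation step forced on us because Dirac masses are unavailable here.

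First I would record that $\Delta_{\lambda}\bigl([0,1]\bigr)$ is convex: a convex combination of two probability measures absolutely continuous with respect to Lebesgue measure is again absolutely continuous with respect to Lebesgue measure. Together with the bilinearity of $\mathcal G$ already noted above, this shows that the hypotheses of \autoref{t.fishburn} are met. Since $\mathcal G$ is adversarial, the implication (a)$\Rightarrow$(b) of \autoref{t.fishburn} furnishes $\alpha,\beta\in\Rset$ with $\alpha>0$ such that
\[
\mathbb{E}_{\xi_{2}}(p)\;=\;-\alpha\,\mathbb{E}_{\xi_{1}}(p)\;+\;\beta \qquad\text{for every }p\in\Delta_{\lambda}\bigl([0,1]\bigr)\times\Delta_{\lambda}\bigl([0,1]\bigr).
\]

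The remaining task is to descend from this identity between the two expected-utility functionals to the asserted identity between the continuous integrands $\xi_{1},\xi_{2}$. In the finite-action case one simply evaluates at pairs of point masses; here I would instead fix $(x_{1},x_{2})\in[0,1]^{2}$, set $I_{i}^{r}:=[0,1]\cap[x_{i}-r,\,x_{i}+r]$ for small $r>0$, and let $u_{i}^{r}$ be the normalization of Lebesgue measure restricted to $I_{i}^{r}$; each $u_{i}^{r}$ lies in $\Delta_{\lambda}\bigl([0,1]\bigr)$. Since $\xi_{k}$ is continuous at $(x_{1},x_{2})$, the averages
\[
\mathbb{E}_{\xi_{k}}\bigl(u_{1}^{r},u_{2}^{r}\bigr)\;=\;\frac{1}{|I_{1}^{r}|\,|I_{2}^{r}|}\bigintsss_{I_{1}^{r}\times I_{2}^{r}}\xi_{k}(y_{1},y_{2})\,dy_{1}\,dy_{2}
\]
converge to $\xi_{k}(x_{1},x_{2})$ as $r\to 0^{+}$ (an elementary estimate from continuity, also a special case of the Lebesgue differentiation theorem). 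Letting $r\to 0^{+}$ in the displayed identity evaluated at $p=(u_{1}^{r},u_{2}^{r})$ then yields $\xi_{2}(x_{1},x_{2})=-\alpha\,\xi_{1}(x_{1},x_{2})+\beta$; as $(x_{1},x_{2})\in[0,1]^{2}$ was arbitrary, $\xi_{2}$ is a positive affine transformation of $-\xi_{1}$.

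The only step with genuine content is this last transfer from measures to integrands, which is also the one point where the absolute-continuity restriction on the strategy sets bites; the shrinking-neighborhood uniform measures dispatch it using nothing beyond continuity of $\xi_{1}$ and $\xi_{2}$ (without continuity the argument would deliver the affine identity only Lebesgue-almost everywhere). An alternative route is to invoke weak-$*$ density of $\Delta_{\lambda}\bigl([0,1]\bigr)$ among all Borel probability measures on $[0,1]$ together with weak-$*$ continuity of $\mathbb{E}_{\xi_{k}}$, so that the identity persists for the product measures $\delta_{x_{1}}\otimes\delta_{x_{2}}$ in the limit --- the same idea in different dress.
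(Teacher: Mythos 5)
Your proposal is correct and follows the same route as the paper: apply \autoref{t.fishburn} to get the affine identity between the expected-utility functionals, then use continuity of $\xi_1,\xi_2$ to descend to the integrands. The only difference is that the paper dispatches the final step with the bare phrase ``by continuity,'' whereas you supply the shrinking-interval (Lebesgue-differentiation) argument that justifies it --- a welcome elaboration, not a deviation.
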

\begin{proof}
By \autoref{t.fishburn} we obtain immediately that
there are there are $\alpha,\beta\in\Rset$ with $\alpha >0$ such that such that for every
$p\in \Delta_{\lambda}\bigl([0,1]\bigr)\times \Delta_{\lambda}\bigl([0,1]\bigr)$,
$u\xi_1(p) = -\alpha u\xi_2(p) + \beta$.
By continuity of $\xi_1$ and $\xi_2$,
in fact $\xi_1(p) = -\alpha \xi_2(p) +\beta$.
\end{proof}

\subsection{An Alternative Proof}
\label{ss.altproof}
A natural question is whether   \autoref{t.fishburn} can be obtained from \autoref{corollary:adp09}, especially since the proof of the latter given by~\citet{adp09}
is (at least on the surface) quite different from the one given here, as
well as from the proof by~\citet{fr78}.
We turn to this question next, giving an alternative proof of our main result,
\autoref{t.fishburn}, presupposing its consequence for finite actions sets, that is,~\autoref{corollary:adp09}.

To this end, the following terminology will be useful for the proof.
Given $\mathcal{P} \subseteq \mathcal{P}_1\times \mathcal{P}_2$, 
call pair $(\alpha, \beta) \in \reals\times\reals$ \term{compatible with} $\mathcal{P}$
if$ 
u_2 (p) \;=\; -\alpha  u_1 (p) \,+\, \beta$ for all $p \in \mathcal{P}$.

\begin{proof}
We will concentrate on the non-trivial implication, 
(a) $\Rightarrow$ (b).
To this end, we make the following observation by appealing to ~\autoref{corollary:adp09}.
\begin{subclaim*}
For any three strategy profiles $\bigl\{\,p^{(1)},~p^{(2)}, ~p^{(3)}\,\bigr\} \subseteq \mathcal P_1 \times \mathcal P_2$,
there is a pair $(\alpha,\beta) \in \reals\times\reals$ with $\alpha > 0$  
that is compatible with $\bigl\{\,p^{(1)},~p^{(2)}, ~p^{(3)}\,\bigr\}.$
\end{subclaim*}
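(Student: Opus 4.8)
The plan is to extract from the three profiles a finite two-person game in which each player has three actions, and to invoke \autoref{corollary:adp09} on it. Write $p^{(i)} = (p^{(i)}_1, p^{(i)}_2)$ with $p^{(i)}_1 \in \mathcal{P}_1$ and $p^{(i)}_2 \in \mathcal{P}_2$ for $i \in \{1,2,3\}$, and set $S_1 = S_2 = \{1,2,3\}$, regarding the index $j \in S_1$ as a name for the element $p^{(j)}_1$ and $k \in S_2$ as a name for $p^{(k)}_2$. Define $\nu_m \colon S_1 \times S_2 \to \Rset$ by $\nu_m(j,k) = u_m\bigl(p^{(j)}_1, p^{(k)}_2\bigr)$ for $m = 1,2$, and form the bilinear two-person non-cooperative game $\mathcal{G}' = \bigl\langle \Delta(S_1), \Delta(S_2), \mathbb{E}_{\nu_1}, \mathbb{E}_{\nu_2}\bigr\rangle$ in the notation established at the start of \autoref{s.fishburn}.

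First I would verify the one point that actually requires an argument, namely that $\mathcal{G}'$ is adversarial. For this, introduce the mixing map $\phi \colon \Delta(S_1) \times \Delta(S_2) \to \mathcal{P}_1 \times \mathcal{P}_2$ given by $\phi(q_1, q_2) = \bigl(\sum_{j} q_1(j)\, p^{(j)}_1,\ \sum_{k} q_2(k)\, p^{(k)}_2\bigr)$, which is well defined because $\mathcal{P}_1$ and $\mathcal{P}_2$ are convex. Applying bilinearity of $u_m$ in each coordinate (iterated over the three summands) yields $\mathbb{E}_{\nu_m}(q_1, q_2) = u_m\bigl(\phi(q_1, q_2)\bigr)$ for all $(q_1, q_2) \in \Delta(S_1) \times \Delta(S_2)$ and $m = 1,2$. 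Since $\mathcal{G}$ is adversarial, \eqref{e.str.comp} holds for the pair $\phi(q), \phi(q') \in \mathcal{P}_1 \times \mathcal{P}_2$ for arbitrary $q, q'$; substituting the displayed identity turns this into \eqref{e.str.comp} for $\mathcal{G}'$, so $\mathcal{G}'$ is adversarial.

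Then I would conclude by applying \autoref{corollary:adp09} to $\mathcal{G}'$: it delivers $\alpha > 0$ and $\beta \in \Rset$ with $\nu_2(j,k) = -\alpha\, \nu_1(j,k) + \beta$ for all $(j,k) \in S_1 \times S_2$. Evaluating along the diagonal $(j,k) = (i,i)$ and observing that $\nu_m(i,i) = u_m\bigl(p^{(i)}_1, p^{(i)}_2\bigr) = u_m(p^{(i)})$, we obtain $u_2(p^{(i)}) = -\alpha\, u_1(p^{(i)}) + \beta$ for $i = 1,2,3$, which is exactly the statement that $(\alpha, \beta)$ is compatible with $\bigl\{p^{(1)}, p^{(2)}, p^{(3)}\bigr\}$. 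The main obstacle is really only the bilinearity bookkeeping underlying $\mathbb{E}_{\nu_m} = u_m \circ \phi$; everything else is formal. I would also remark in passing that no distinctness of the $p^{(i)}$ (or of their components) is needed, since \autoref{corollary:adp09} imposes no such requirement on the action sets — in the degenerate cases the constructed game merely has repeated actions.
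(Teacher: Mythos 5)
Your proposal is correct and follows essentially the same route as the paper: restrict to the finite game generated by the three profiles and invoke \autoref{corollary:adp09}. The only difference is one of care --- you make explicit the mixing map $\phi$ and the bilinearity computation $\mathbb{E}_{\nu_m}=u_m\circ\phi$ needed to see that the induced finite game is adversarial, whereas the paper simply speaks of ``restrictions'' of $u_1,u_2$ to $\Delta(S_1)\times\Delta(S_2)$ and asserts the adversarial property without comment; your version is a faithful filling-in of that step rather than a new argument.
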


To establish this claim,
consider the two-player game with finite action sets
$S_i = \bigl\{\,p^{(1)}_i, ~p^{(2)}_i, ~p^{(3)}_i\,\bigr\}$ for $i=1,2$, and define utilities $u'_1, u'_2$ on $\prob(S_1) \times \prob(S_2)$ to be the restrictions of  
$u_1, u_2$  to $\prob(S_1) \times \prob(S_2)$.
We thereby obtain an adversarial game
$\langle \prob(S_1), \prob(S_2), u'_1, u'_2\rangle$.
By \autoref{corollary:adp09}, 
there is a pair $(\alpha, \beta) \in \reals\times\reals$ 
with $\alpha > 0$ such that~ $u'_2(p) \,=\, -\alpha u'_1(p) \,+\, \beta$
for all $p \in \prob(S_1)\times \prob(S_2)$.
In particular, the pair $(\alpha, \beta)$ is compatible with $\{\,p^{(1)}, ~p^{(2)}, ~p^{(3)}\,\}$.

\autoref{t.fishburn} now is immediate:
Consider a two-person adversarial game
$\mathcal{G}=\langle \mathcal{P}_{1},\mathcal{P}_{2},u_{1},u_{2}\rangle$.
We may assume $u_1$ is not constant.
Fix $p^{(1)}, p^{(2)}$ with
$u_1(p^{(1)}) \neq u_1(p^{(2)})$.
There exists exactly one $(\alpha, \beta) \in \reals\times\reals$ that is compatible with $\{\,p^{(1)}, ~p^{(2)}\,\}$; denote this unique pair by $(\alpha^{*}, \beta^{*})$.
Now let $p^{(3)} \in \mathcal P_1 \times \mathcal P_2$ be arbitrary. 
By the claim, there is a pair $(\alpha,\beta)\in\reals\times\reals$ that is compatible
with $\{\,p^{(1)}, ~p^{(2)}, ~p^{(3)}\,\}$.
We have $(\alpha, \beta) = (\alpha^{*}, \beta^{*})$ by the uniqueness property of the latter.
As  $u_2 (p^{(3)})  = -\alpha^{*} u_1 (p^{(3)}) + \beta^{*}$ and $p^{(3)}$ was arbitrary, we are done.
\end{proof}

\section{Concluding Remarks}
\label{TeXFolio:sec4}

Two directions proceed from the connections between the game theory and the decision theory communities  which have been forged  in this letter.  We leave both for future work. 

The first is a consequence of the observation that  the central results
presented here for convex spaces   may be suitably recast in terms of mixtures
spaces as originally pioneered by~\citet{hm53}. This being said, the question
arises as to whether  the theory of normal form games articulated
by~\citet{na50,na51} and~\citet{de52}  can be set in mixture spaces.   This
would require  suitable embedding theorems that take action sets in a
mixture-space setting to topological vector spaces, and then bringing  back the
existence results available there.\footnote{In this connection, a direct proof
of the results due to \citet{fr78} may be useful for workers in the field. We
are grateful to an anonymous reviewer for suggesting this.} 

Second, in an influential paper,~\citet{mv78} introduce the notion of 
\emph{strategically equivalent games} in the context of two-player games, and
use it to characterize games that are  equivalent in this sense to a zero-sum
game.  The connection to these ideas also merits further investigation and
resolution, especially so since the Moulin--Vial notion  plays 
 a crucial role in the burst of recent activity investigating computational
aspects of two-player games by both economists and computer scientists. The
second proof of the main result of this letter has direct relevance in this
context in view of work stemming from \citet{adp09}.\footnote{For the economic literature,
see, e.g., ~\citet{set02,ss06}~and~\citet{te23}, and their references to earlier
work:~\citet{lh64,vo58} and~\citet{im80}. For the literature in computer
science,
see, e.g., ~\citet{ks12,ph17,hy19,hg23,dgp09,cdt09}
and~\citet{nrtv}. } 

A final  summary statement. In its focus on two-player games, the basic thrust
of this work  goes against the grain of the development of non-cooperative game
theory in which the generalization of two-player games was sought in
$n$-player games, and even in games with an uncountable continuum
of players.\footnote{These are games in which each agent from a continuum is
strategically-negligible but a statistical summary of the plays of all the
players has an impact on an individual decision; see~\citet{ks02} and their
references.}  The adversarial aspect that is emphasized here concentrates on
the them-versus-us aspect without any defensiveness.

\bibliographystyle{plainnat}
%\bibliography{strictly-competitive}

\end{document}